\begin{document}
\newtheorem{conjecture}{Conjecture}\newtheorem{corollary}{Corollary}\newtheorem{theorem}{Theorem}
\newtheorem{lemma}{Lemma}\newtheorem{observation}{Observation}\newtheorem{definition}{Definition}
\newtheorem{remark}{Remark}\global\long\global\long\def\ket#1{|#1 \rangle}
 \global\long\global\long\def\bra#1{\langle#1|}
 \global\long\global\long\def\proj#1{\ket{#1}\bra{#1}}

\begin{frontmatter}

\title{The classicality and quantumness of a quantum ensemble}

\author[hunan,ustc]{Xuanmin Zhu}

\author[ustc]{Shengshi Pang}

\author[ustc]{Shengjun Wu\corref{cor}}

\ead{shengjun@ustc.edu.cn}

\author[hunan]{Quanhui Liu}

\ead{qhliu@hnu.cn}

\address[hunan]{School for Theoretical Physics and Department of Applied Physics
Hunan University, Changsha 410082, China}

\address[ustc]{Hefei National Laboratory for Physical Sciences at Microscale and
Department of Modern Physics, University of Science and Technology
of China, Hefei, Anhui 230026, China}

\cortext[cor]{Corresponding author.}

\begin{abstract}
In this paper, we investigate the classicality and quantumness of
a quantum ensemble. We define a quantity called ensemble classicality
based on classical cloning strategy (ECCC) to characterize how classical
a quantum ensemble is. An ensemble of commuting states  has a unit
ECCC, while a general ensemble can have a ECCC less than 1. We also
study how quantum an ensemble is by defining a related quantity called
quantumness. We find that the classicality of an ensemble is closely
related to how perfectly the ensemble can be cloned, and that the
quantumness of the ensemble used in a quantum key distribution (QKD)
protocol is exactly the attainable lower bound of the error rate in
the sifted key.\end{abstract}
\begin{keyword}
classicality \sep quantumness \sep quantum cloning \sep quantum
key distribution
\end{keyword}

\end{frontmatter}

\section{Introduction}
Quantum theory has revealed many counterintuitive features of quantum
systems in comparison with those of classical systems. The state of a classical system
can be copied, deleted or distinguished with a unit probability,
while an unknown quantum state can never be perfectly copied or deleted ~\cite{clone,clone0,delete},
and non-orthogonal quantum states
cannot be reliably distinguished~\cite{nielsen,pang}. The no-cloning
theorem assures the security of quantum key distribution protocols~\cite{bb84}
and prohibits superluminal communication\cite{clone3}. Non-commuting
observables in quantum mechanics cannot be determined simultaneously,
and a quantum measurement usually disturbs the involved quantum systems,
in striking contrast to the fact that measurements can leave classical
systems unperturbed in principle.

In this paper, we study the classicality and quantumness of a quantum
ensemble $\mathcal{E}=\{q_{i},\rho_{i}\}$,  specified by
the set of states $\rho_{i}$ and the corresponding probabilities $p_{i}$.
Some quantum ensembles can be manipulated
like classical ones, whereas others can not. For example, an unknown
state from an ensemble $\mathcal{E}_{ort}$ consisting of orthogonal
pure states could be cloned perfectly and determined without being
disturbed; on the other hand, a state from an ensemble $\mathcal{E}_{non}$
consisting of non-orthogonal states cannot be cloned perfectly and determined
exactly~\cite{extraction}. By classicality, we mean how well
a quantum ensemble can be manipulated as a classical one. Perfect
clonability and distinguishability are essential characteristics of
classical sets of states. Intuitively, the ensemble $\mathcal{E}_{ort}$
is more classical than $\mathcal{E}_{non}$, so the following questions
naturally arise: what kind of ensembles could be handled like classical ones
and what kind could not? Is there a quantity to quantify how classical
an ensemble is? There have already been some researches on the quantumness
of quantum ensembles~\cite{qianren1,qianren2,qianren3,luo}. In
this paper, we study the classicality and quantumness of a quantum
ensemble from a different perspective. We start from considering how
precisely an unknown state from the ensemble can be cloned and how
stable it is under an appropriate measurement, i.e., how close the
state after the measurement is to the original one.

For an arbitrary unknown input state $\rho$,
a universal perfect cloning process does not exist,
and many approximate cloning strategies have been proposed. One interesting
strategy is given by the unitary transformation
 $\ket j\ket 0\to\ket j\ket j$, where $\{\left|j\right\rangle \}$
is a basis  of the Hilbert space of the input system and $\ket 0$
is a blank state of an ancillary system. This cloning strategy was
first introduced in \cite{clone}, and we call it a \emph{classical
cloning strategy} under basis $\{\left|j\right\rangle \}$
as it is the quantum counterpart of the cloning process in the classical
world.

Obviously, this classical cloning strategy is neither perfect nor
optimum for cloning an unknown quantum state. The copies produced
are generally different from the original state, so it is meaningful
to quantify the distance between a copy and the original state. The
way to measure the distance is investigated intensively and many proposals
have been put forward~\cite{nielsen,vedral1}. One distance measure
is the relative entropy~\cite{luo,vedral1}, which has been used
to quantify entanglement and correlations~\cite{vedral1,vedral,modi}. However,
the relative entropy is not a genuine metric as it is not symmetric.
Two other widely used distance measures, the trace distance and the
fidelity~\cite{nielsen}, are well defined because both of them
are symmetric and satisfy the requirements of good distance measures.
In this paper, we use fidelity as the distance measure. The fidelity
of $\rho$ and $\sigma$ is defined as~\cite{jozsa} \begin{equation}
F(\rho,\sigma)=(\mathrm{tr}{\sqrt{{\rho}^{1/2}\sigma{\rho}^{1/2}}})^{2}.\label{eq:fidelity}\end{equation}
(The square root of the above quantity is also frequently used as
the fidelity ~\cite{nielsen}, but we adopt Eq. \eqref{eq:fidelity}
as the fidelity definition throughout this paper.) It is obvious that
$0\leq F(\rho,\sigma)\leq1$ and $F(\rho,\sigma)=1$ if and only if
$\rho=\sigma$.

\section{The ensemble classicality based on classical cloning strategy}

For an ensemble  $\mathcal{E}=\{q_{i},\rho_{i}\}$ consisting of the set of
states $\{ \rho_i \}$ and the corresponding probabilities of occurrence $\{p_{i}\}$,
we investigate its classicality by studying how well an unknown state
from the ensemble can be cloned by the classical cloning strategy
under the basis $\{\ket j\}$. First, we define the \emph{average
cloning fidelity} for the ensemble $\mathcal{E}$ as \begin{equation}
F_{ave}(\mathcal{E},\{\ket j\})=\sum_{i}q_{i}F(\rho_{i},\rho_{i}^{\prime}),\label{eq:average}\end{equation}
where $\rho_{i}^{\prime}$ is the state of an output copy via the classical cloning
strategy under basis $\{\left|j\right\rangle \}$ if the input state is $\rho_i$, i.e.,\begin{equation}
\rho_{i}^{\prime}=\mathrm{Tr}_{2}\sum_{i,j}\langle i|\rho_{i}|j\rangle|i\rangle|i\rangle\langle j|\langle j|=\sum_{j}\bra j\rho_{i}\ket j\proj j.\end{equation}
For an ensemble of orthogonal pure states, an average cloning
fidelity  1 could be reached only if the states in the ensemble are actually
the cloning basis states.  For a general quantum ensemble
$\mathcal{E}=\{q_{i},\rho_{i}\}$, it can be seen that $F_{ave}\leq1$
as $F(\rho_{i},\rho_{i}')\leq1$ and $\sum_{i}q_{i}=1$. The average
cloning fidelity $F_{ave}$ represents the performance of a classical
copying strategy on a quantum ensemble; meanwhile, $F_{ave}$ can
also represent stability of the states in an ensemble under a projective
measurement, since $\rho_{i}^{\prime}=\sum_{j}\bra j\rho_{i}\ket j\proj j$
is also the density matrix after a von Neumann measurement on $\rho_{i}$
along the basis $\{\ket j\}$. In this sense, the average cloning fidelity
characterizes how classical the ensemble is. Therefore, we define
a quantity $J$, the ensemble classicality based on classical
cloning strategy (ECCC), to quantify how classical the quantum ensemble
$\mathcal{E}=\{q_{i},\rho_{i}\}$ is,
\begin{equation}
J(\mathcal{E})=\max_{\{\ket j\}}\{F_{ave}(\mathcal{E},\ket j)\}=\max_{\{\ket j\}}\{\sum_{i}q_{i}F(\rho_{i},\rho_{i}')\},\label{eq:jingdian}
\end{equation}
where $\{\ket j\}$ is an orthonormal basis of the subspace spanned
by the states in the ensemble. For an infinite quantum ensemble $\mathcal{E}=\{f(\alpha),\rho(\alpha)\}$,
the ECCC is similarly defined as \begin{equation}
J(\mathcal{E})=\max_{\{\ket j\}}  \int f(\alpha)F(\rho(\alpha),\rho(\alpha)'){\mathrm{d}}\alpha,\label{eq:lianxu}\end{equation}
where $\rho(\alpha)'=\sum_{j}\bra j\rho(\alpha)\ket j\proj j$ is the state of an output copy
for an input $\rho (\alpha)$ and
$f(\alpha)$ is the probability distribution function satisfying $\int f(\alpha){\mathrm{d}}\alpha=1$.
It can be seen that the $J(\mathcal{E})$ defined above is an intrinsic
property of the ensemble, independent of the cloning basis. It is
evident that $\mathcal{E}$ can be manipulated like a classical
ensemble only if $J(\mathcal{E})=1$.

A single state $\rho$ can be considered as an ensemble consisting
of just one state with unit probability.
The ECCC of a single-state ensemble $\rho$ is equal to one, since
the cloning basis states could be chosen as the eigenstates of $\rho$,
then $\rho=\rho'$, and thus $J=F(\rho,\rho')=1$.

In the following, the properties of ECCC will be studied. The range of $J$ will be given
in the first two theorems.

\begin{theorem} \label{thm:range}
The ECCC of a general ensemble $\mathcal{E}$ of states in a $d$-dimensional Hilbert space has the following upper and lower bounds:
(i) $J(\mathcal{E})\leq 1$, with $J(\mathcal{E})=1$ if and only if all quantum
states in the ensemble commute with each other; and (ii)
$J(\mathcal{E})> \frac{1}{d}$ for any ensemble $\mathcal{E}$, and
$J(\mathcal{E})\geq \frac{1}{d}+ q_{m} \frac{d-1}{d} \geq \frac{N+d-1}{Nd}$ for any finite ensemble $\mathcal{E}=\{q_i, \rho_i | i=1,2,\cdots,N\}$ of $N$ states,
where $q_{m}=max\{q_1,\cdots,q_N \}$.
\end{theorem}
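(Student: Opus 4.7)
The plan is to split the proof along the two clauses. Clause (i) is a termwise estimate combined with the characterization of simultaneously diagonalizable families; clause (ii) hinges on a single-state lemma $F(\rho,\rho')\geq 1/d$ together with a well-chosen cloning basis.

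For clause (i), I would observe that $F(\rho_i,\rho_i')\leq 1$ and $\sum_i q_i=1$ immediately give $F_{ave}\leq 1$ and hence $J\leq 1$. Equality requires some basis $\{|j\rangle\}$ with $F(\rho_i,\rho_i')=1$ for every $i$, i.e.\ $\rho_i=\rho_i'=\sum_j\langle j|\rho_i|j\rangle\proj{j}$. The right-hand side is diagonal in $\{|j\rangle\}$, so every $\rho_i$ must share this eigenbasis, which is the standard criterion for pairwise commutation; the converse direction is immediate by choosing the common eigenbasis.

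The core of clause (ii) is the lemma that $F(\rho,\rho')\geq 1/d$ for any state $\rho$ on a $d$-dimensional space and any orthonormal basis $\{|j\rangle\}$ of it. I plan to derive this in two steps. First, the general inequality $F(\rho,\sigma)\geq \mathrm{tr}(\rho\sigma)$: writing $X=\rho^{1/2}\sigma\rho^{1/2}$ with eigenvalues $\lambda_i\geq 0$, one has $(\mathrm{tr}\sqrt{X})^2=(\sum_i\sqrt{\lambda_i})^2\geq \sum_i\lambda_i=\mathrm{tr}(\rho\sigma)$. Second, specializing to $\sigma=\rho'$ gives $F(\rho,\rho')\geq \mathrm{tr}(\rho\rho')=\sum_j\langle j|\rho|j\rangle^2$, and Cauchy--Schwarz applied to the probability vector $(\langle j|\rho|j\rangle)_j$ with $\sum_j\langle j|\rho|j\rangle=1$ yields $\sum_j\langle j|\rho|j\rangle^2\geq 1/d$.

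With the lemma in place, the assembly is routine: pick a state $\rho_m$ of maximal weight $q_m$ and choose $\{|j\rangle\}$ to be an eigenbasis of $\rho_m$, extended to an orthonormal basis of the ensemble's support if necessary. Then $\rho_m$ is diagonal in this basis so $F(\rho_m,\rho_m')=1$, while the lemma bounds the remaining terms from below by $1/d$, giving
\begin{equation}
J(\mathcal{E})\geq q_m\cdot 1+(1-q_m)\cdot\frac{1}{d}=\frac{1}{d}+q_m\frac{d-1}{d}>\frac{1}{d},
\end{equation}
and the pigeonhole bound $q_m\geq 1/N$ for a finite $N$-element ensemble then delivers $J\geq(N+d-1)/(Nd)$. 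The only genuine obstacle is the single-state lemma $F(\rho,\rho')\geq 1/d$, which pairs the matrix inequality $F\geq \mathrm{tr}(\rho\sigma)$ with a convexity estimate on the diagonal; everything else is termwise or spectral bookkeeping.
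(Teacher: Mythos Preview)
Your proposal is correct and follows essentially the same route as the paper's proof: for (i) you use the termwise bound $F\leq 1$ and the equivalence between $F(\rho_i,\rho_i')=1$ and simultaneous diagonalizability, exactly as the paper does; for (ii) you choose the eigenbasis of the maximum-weight state $\rho_m$, invoke $F(\rho,\sigma)\geq\mathrm{tr}(\rho\sigma)$ together with the Cauchy--Schwarz bound $\sum_j\langle j|\rho|j\rangle^2\geq 1/d$, and then use $q_m\geq 1/N$ --- which is precisely the paper's argument, only packaged as an explicit single-state lemma rather than done inline (and with the bonus that you supply a self-contained proof of $F\geq\mathrm{tr}(\rho\sigma)$ where the paper cites Jozsa).
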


Proof of theorem 1 is given in appendix A.
The inequality $\frac{1}{d} < J(\mathcal{E}) \leq 1$ is also valid for an ensemble of infinite number of states.  The lower bound $\frac{1}{d}$ is generally not achievable for finite or infinite ensembles.
Before presenting attainable lower bounds for specific cases, we give the following lemma (its proof is given in appendix B), which will be used in proving theorem 2.

\begin{lemma}
For any state $\rho$ of a qubit system, the classical cloning strategy
is performed with respect to a basis $\{\ket{e_{i}}\}$ and the state
of either output copy is denoted by $\rho'$, we have the following inequality \begin{equation}
F(\rho,\rho')\geq\sum_{i=0}^{1}q_{i}F(\ket{\psi_{i}},\rho_{i}')=\sum_{i,j=0}^{1}q_{i}|\langle e_{j}|\psi_{i}\rangle|^{4},\end{equation}
where $\rho_{i}'=\sum_{j=0}^{1}|\langle e_{j}|\psi_{i}\rangle|^{2}\proj{e_{j}}$ is
the state of an output copy for an input $\ket{\psi_{i}}$, and $\{q_{i},\ket{\psi_{i}}\}$ are the eigenvalues and eigenvectors of $\rho$.
Here, the right hand side of the inequality is actually the average cloning fidelity of
the eigen-ensemble $\mathcal{E} = \{q_i, \ket{\psi_{i}} \}$ of $\rho$.
\end{lemma}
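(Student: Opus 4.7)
My plan is to reduce the claimed inequality to a one-line algebraic estimate by exploiting two special features of the qubit setting: the closed-form qubit fidelity $F(\rho,\sigma) = \mathrm{tr}(\rho\sigma) + 2\sqrt{\det\rho\,\det\sigma}$, and the fact that orthogonal qubit eigenstates have antipodal Bloch vectors. The stated equality on the right-hand side is immediate: since $\ket{\psi_i}$ is pure, $F(\ket{\psi_i},\rho_i') = \bra{\psi_i}\rho_i'\ket{\psi_i}$, and substituting $\rho_i' = \sum_j |\langle e_j|\psi_i\rangle|^2 \proj{e_j}$ gives $\sum_j |\langle e_j|\psi_i\rangle|^4$.

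Next, I parameterize in Bloch coordinates. Write $\vec r_0=\vec n$ for the (unit) Bloch vector of $\proj{\psi_0}$; orthogonality of $\ket{\psi_0}$ and $\ket{\psi_1}$ forces $\vec r_1 = -\vec n$. Without loss of generality I align the cloning basis with $\hat z$, so that dephasing in that basis projects every Bloch vector onto $\hat z$. Then $\rho$ has Bloch vector $(q_0-q_1)\vec n$, while $\rho' = q_0\rho_0'+q_1\rho_1'$ has Bloch vector $(q_0-q_1)n_z\hat z$. Using the Bloch-form fidelity $F(\rho,\sigma) = \tfrac12[1+\vec r\cdot\vec s +\sqrt{(1-|\vec r|^2)(1-|\vec s|^2)}]$ and introducing $a=(q_0-q_1)^2\in[0,1]$ and $b=n_z^2\in[0,1]$, both sides of the lemma reduce to
\begin{equation*}
\mathrm{LHS} = \tfrac12\bigl[1+ab+\sqrt{(1-a)(1-ab)}\bigr], \qquad \mathrm{RHS} = \tfrac12(1+b).
\end{equation*}

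The inequality $\mathrm{LHS}\ge\mathrm{RHS}$ then amounts to $\sqrt{(1-a)(1-ab)}\ge b(1-a)$. If $a=1$ both sides vanish; otherwise, dividing by $\sqrt{1-a}$ and squaring gives $1-ab\ge b^2(1-a)$, which rearranges to $(1-b)\bigl(1+b(1-a)\bigr)\ge 0$, manifestly true. The only minor subtleties are the edge cases $a=1$ (pure $\rho$) and $b=1$ (cloning basis coincides with the eigenbasis of $\rho$), where the inequality saturates, matching the intuition that equality holds exactly when $\rho$ already commutes with the cloning basis. I expect the main technical step to be the correct Bloch-vector evaluation of the square-root term $\sqrt{\det\rho\,\det\rho'} = \tfrac14\sqrt{(1-a)(1-ab)}$ (together with the identity $\det\rho = q_0q_1$); once this is in hand, the remaining algebraic estimate is elementary and requires no further machinery.
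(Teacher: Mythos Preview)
Your proof is correct and follows essentially the same route as the paper: both invoke the closed-form qubit (Jozsa/Bloch) fidelity to reduce $F(\rho,\rho')$ and the right-hand side to explicit functions of $(q_0-q_1)^2$ and the angle between the eigenbasis and the cloning basis (your $a,b$ are the paper's $(q_0-q_1)^2$ and $\cos^2\theta_1$), then finish with an elementary one-variable estimate. The only cosmetic difference is the final algebra---the paper bounds the difference below by $\tfrac12\sin^2\theta_1\bigl(1-(q_0-q_1)^2\bigr)=\tfrac12(1-b)(1-a)$, whereas you square and factor $(1-b)\bigl(1+b(1-a)\bigr)\ge 0$---and your aside that equality ``holds exactly when $\rho$ commutes with the cloning basis'' slightly understates the $a=1$ (pure $\rho$) saturation case, but the proof of the inequality itself is sound.
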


In the following theorem, we present a tighter and achievable lower bound  ($\frac{2}{d+1}$) of the ECCC for two special cases.

\begin{theorem} \label{thm:range1}
(i) For an ensemble $\mathcal{E}$ of pure states in a $d$-dimensional Hilbert space, $\frac{2}{d+1}\leq J(\mathcal{E})\leq1$.

(ii) For an ensemble $\mathcal{E}$ consisting of general (pure or mixed) states in a two-dimensional Hilbert space of a qubit system, $\frac{2}{3}\leq J(\mathcal{E})\leq1$. \end{theorem}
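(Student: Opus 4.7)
The upper bound $J(\mathcal{E})\le 1$ in both parts is already contained in Theorem~\ref{thm:range}, so only the lower bounds require work. My plan is to obtain (i) via a probabilistic/Haar-averaging argument over the choice of cloning basis, and then to deduce (ii) as a corollary of (i) using Lemma~1.

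For part (i), assume each $\rho_{i}=\proj{\psi_{i}}$. Under the classical cloning strategy in basis $\{\ket{e_{j}}\}$, the output copy is $\rho_{i}'=\sum_{j}|\langle e_{j}|\psi_{i}\rangle|^{2}\proj{e_{j}}$, so
\begin{equation}
F(\proj{\psi_{i}},\rho_{i}')=\bra{\psi_{i}}\rho_{i}'\ket{\psi_{i}}=\sum_{j}|\langle e_{j}|\psi_{i}\rangle|^{4},
\end{equation}
and hence $F_{ave}(\mathcal{E},\{\ket{e_{j}}\})=\sum_{i,j}q_{i}|\langle e_{j}|\psi_{i}\rangle|^{4}$. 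I would then draw the basis from the Haar measure on $U(d)$: each $\ket{e_{j}}$ is marginally uniform on the unit sphere of $\mathbb{C}^{d}$, and the standard fourth-moment identity for Haar-random pure states gives
\begin{equation}
\mathbb{E}\,|\langle e_{j}|\psi_{i}\rangle|^{4}=\frac{2}{d(d+1)}.
\end{equation}
Summing over $j=1,\ldots,d$ and taking the $q_{i}$-weighted average yields $\mathbb{E}[F_{ave}]=\frac{2}{d+1}$. Since $J(\mathcal{E})$ is the supremum over bases, and the supremum is at least the average, some basis attains $F_{ave}\ge\frac{2}{d+1}$, establishing $J(\mathcal{E})\ge\frac{2}{d+1}$.

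For part (ii), I would reduce the general qubit case to part (i). Given $\mathcal{E}=\{q_{i},\rho_{i}\}$ with $\rho_{i}$ possibly mixed, diagonalize each $\rho_{i}=\sum_{k}p_{ik}\proj{\psi_{ik}}$ and form the refined pure-state ensemble $\tilde{\mathcal{E}}=\{q_{i}p_{ik},\ket{\psi_{ik}}\}$. Applying Lemma~1 to each $\rho_{i}$ yields
\begin{equation}
F(\rho_{i},\rho_{i}')\ge\sum_{k}p_{ik}F(\proj{\psi_{ik}},\rho_{ik}'),
\end{equation}
so a $q_{i}$-average gives $F_{ave}(\mathcal{E},\{\ket{e_{j}}\})\ge F_{ave}(\tilde{\mathcal{E}},\{\ket{e_{j}}\})$ for every basis. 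Maximizing over bases on both sides, $J(\mathcal{E})\ge J(\tilde{\mathcal{E}})\ge\frac{2}{d+1}=\frac{2}{3}$ with $d=2$, completing the proof.

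The main obstacle is the Haar moment computation in part (i). The subtle point is that although the basis vectors of a Haar-random orthonormal basis are not independent, each individually inherits the Haar-uniform distribution on $S^{2d-1}\subset\mathbb{C}^{d}$, so the fourth-moment formula can be applied termwise in $j$ and then summed without any correction. Once this averaging step is in hand, part (ii) is just a direct chain of inequalities using Lemma~1 and the pure-state case.
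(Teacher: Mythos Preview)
Your proposal is correct and follows essentially the same approach as the paper: the paper also bounds $J(\mathcal{E})$ from below by the Haar average of $F_{ave}$ over orthonormal bases, using the same fourth-moment identity (phrased via Gamma functions) to obtain $\frac{2}{d+1}$, and likewise reduces part (ii) to the pure-state case via Lemma~1 applied to the eigen-decomposition of each $\rho_{i}$. The only cosmetic difference is that you invoke part (i) on the refined ensemble $\tilde{\mathcal{E}}$, whereas the paper repeats the averaging calculation directly.
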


The proof is given in appendix C.
The lower bounds are actually achieved by an infinite ensemble consisting equiprobably of all pure states in the respective Hilbert space (see appendix C).

The ECCC $J$ of an ensemble $\{q_{i},\rho_{i}\}$ quantifies the maximum average performance of cloning the states from the ensemble by a classical strategy, thus provides a measure of how classical the ensemble is. From another
perspective, the quantity $J$ of an ensemble $\{q_{i},\rho_{i}\}$
also tells us to what extent the states in the ensemble commute. The
ECCC of an ensemble of mutually commuting states is equal to $1$,
this is also in accordance with the fact that commuting states could be
broadcasted~\cite{broadcast}.

\begin{theorem} \label{thm:zhiji} For the ensembles $\mathcal{E}_{A}=\{q_{i},\rho_{iA}\}$,
$\mathcal{E}_{B}=\{q_{j},\rho_{jB}\}$, and $\mathcal{E}_{AB}=\{{q_{i}}{q_{j}},\rho_{iA}\otimes\rho_{jB}\}$,
there is an inequality \begin{equation}
J(\mathcal{E}_{AB})\geq J(\mathcal{E}_{A})J(\mathcal{E}_{B});\label{eq:1}\end{equation}
the inequality \eqref{eq:1} is also valid
for the infinite ensembles $\mathcal{E}_{A}=\{f(\alpha),\rho_{A}(\alpha)\}$,
$\mathcal{E}_{B}=\{f(\beta),\rho_{B}(\beta)\}$, and $\mathcal{E}_{AB}=\{f(\alpha)f(\beta),\rho_{A}(\alpha)\otimes\rho_{B}(\beta)\}$.
\end{theorem}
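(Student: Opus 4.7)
The plan is to exhibit a specific (generally suboptimal) cloning basis for the product ensemble whose average cloning fidelity already equals $J(\mathcal{E}_A)J(\mathcal{E}_B)$; since $J(\mathcal{E}_{AB})$ is the maximum over all bases, the inequality will follow.

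First I would let $\{\ket{j}_A\}$ and $\{\ket{k}_B\}$ be bases that attain $J(\mathcal{E}_A)$ and $J(\mathcal{E}_B)$ respectively, and form the product basis $\{\ket{j}_A\otimes\ket{k}_B\}$ on the composite system. Under the classical cloning strategy associated with this product basis, the output copy of an input $\rho_{iA}\otimes\rho_{jB}$ is easily computed to be the tensor product of the individual outputs: if $\rho_{iA}'=\sum_j \bra{j}\rho_{iA}\ket{j}\proj{j}$ and $\rho_{jB}'=\sum_k \bra{k}\rho_{jB}\ket{k}\proj{k}$, then the composite output is exactly $\rho_{iA}'\otimes\rho_{jB}'$. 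This step is a direct substitution into the definition, but it is the key structural observation that lets the two halves decouple.

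Next I would invoke the multiplicativity of the Jozsa fidelity on tensor products, $F(\rho_1\otimes\rho_2,\sigma_1\otimes\sigma_2)=F(\rho_1,\sigma_1)F(\rho_2,\sigma_2)$, which is a standard property following from $(\rho_1\otimes\rho_2)^{1/2}=\rho_1^{1/2}\otimes\rho_2^{1/2}$ and the multiplicativity of the trace on product operators. Applying this to $F(\rho_{iA}\otimes\rho_{jB},\,\rho_{iA}'\otimes\rho_{jB}')$ and plugging into the definition of $F_{ave}$ gives
\begin{equation}
F_{ave}(\mathcal{E}_{AB},\{\ket{j}_A\otimes\ket{k}_B\})=\Bigl(\sum_i q_i F(\rho_{iA},\rho_{iA}')\Bigr)\Bigl(\sum_j q_j F(\rho_{jB},\rho_{jB}')\Bigr)=J(\mathcal{E}_A)J(\mathcal{E}_B),
\end{equation}
because the two sums factor thanks to the product form of the probabilities $q_iq_j$. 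Since the product basis is one admissible choice in the maximization defining $J(\mathcal{E}_{AB})$, we conclude $J(\mathcal{E}_{AB})\geq J(\mathcal{E}_A)J(\mathcal{E}_B)$.

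For the infinite-ensemble version I would repeat the identical argument, replacing the discrete sums by integrals against $f(\alpha)f(\beta)\,\mathrm{d}\alpha\,\mathrm{d}\beta$; Fubini allows the double integral to separate into the product of two single integrals, each bounded below by the corresponding $J$ via the definition \eqref{eq:lianxu}. I do not expect any genuine obstacle here: the only subtlety worth flagging is that the product basis may fail to span exactly the subspace generated by $\mathcal{E}_{AB}$ if the individual bases were chosen minimally, but one can always extend the individual bases to orthonormal bases of the full local supports without decreasing the average fidelity, so the maximization domain is large enough to contain our candidate. The proof is essentially a factorization argument, and the inequality (rather than equality) reflects the possibility that entangled cloning bases on $AB$ could do strictly better than product bases.
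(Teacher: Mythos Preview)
Your argument is correct and matches the paper's proof essentially step for step: choose optimal local bases, note that the classical cloning output for a product input under the product basis factors as $\rho_{iA}'\otimes\rho_{jB}'$, apply multiplicativity of the fidelity under tensor products, and use that the product basis is an admissible (though possibly suboptimal) choice in the maximization defining $J(\mathcal{E}_{AB})$. Your additional remark about extending the local bases to cover the relevant subspace is a harmless refinement not mentioned in the paper.
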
 \begin{proof}
Assume that $\{\ket k _A\}$ and $\{\ket m _B\}$ are the bases of the systems
$A$ and $B$ which maximize $\sum_{i}q_{i}F(\rho_{iA},\rho_{iA}')$
and $\sum_{j}q_{j}F(\rho_{jB},\rho_{jB}')$ respectively, then $J(\mathcal{E}_{A})=\sum_{i}q_{i}F(\rho_{iA},\rho_{iA}')$,
$J(\mathcal{E}_{B})=\sum_{j}q_{j}F(\rho_{jB},\rho_{jB}')$, where
$\rho_{iA}'=\sum_{k}\bra k\rho_{iA}\ket k\proj k$ and $\rho_{jB}'=\sum_{m}\bra m\rho_{jB}\ket m\proj m$.
The basis $\{\ket k\otimes\ket m\}$ may not be optimal for $\mathcal{E}_{AB}$,
so from the definition of $J$ we can get \begin{equation}
\begin{split}J(\mathcal{E}_{AB}) & =\max_{\{\ket l^{AB}\}}\{F_{ave}(\mathcal{E}_{AB},\{\ket l^{AB}\})\}\\
 & \geq F_{ave}(\mathcal{E}_{AB},\{\ket k\otimes\ket m \})\\
 & =\sum_{ij}q_{i}q_{j}F(\rho_{iA}\otimes\rho_{jB},\rho_{iA}'\otimes\rho_{jB}')\\
 & =\sum_{ij}q_{i}q_{j}F(\rho_{iA},\rho_{iA}')F(\rho_{jB},\rho_{jB}')\\
 & =J(\mathcal{E}_{A})J(\mathcal{E}_{B}).\end{split}
\end{equation}
The proof for infinite ensembles is similar. \end{proof}

In fact, we have not found any example for which $J(\mathcal{E}_{AB})$ is strictly greater than $J(\mathcal{E}_{A})J(\mathcal{E}_{B})$
so far, so it is an open question that whether $J(\mathcal{E}_{AB})=J(\mathcal{E}_{A})J(\mathcal{E}_{B})$
holds true for all ensembles $\mathcal{E}_{A}$, $\mathcal{E}_{B}$,
$\mathcal{E}_{AB}$ defined in Theorem 3.

It is intuitive to suggest that for an
arbitrary ensemble $\{p_{i},\rho_{i}\}$ and a standard state $\proj 0$,
there is an inequality $J(\{p_{i},\rho_{i}\otimes\proj 0\})\geq J(\{p_{i},\rho_{i}\otimes\rho_{i}\})$,
with equality if and only if all $\rho_{i}$ are commuting.
However, we don't know how to prove this conjecture.

We show that $J$ is invariant under unitary operations. For
a finite ensemble $\mathcal{E}=\{q_{i},\rho_{i}\}$, after a unitary
operation $U$, the ECCC of the new ensemble is given as \begin{equation}
\begin{split}J(U\mathcal{E}U^{\dagger}) & =\max_{\{\ket j\}}\{F_{ave}(U\mathcal{E}U^{\dagger},\ket j)\}\\
 & =\max_{\{U\ket j\}}\{F_{ave}(U\mathcal{E}U^{\dagger},U\ket j)\}\\
 & =\max_{\{U\ket j\}}\{\sum_{i}q_{i}F(U\rho_{i}U^{\dagger},\sum_{j}\bra j\rho_{i}\ket jU\proj jU^{\dagger})\}\\
 & =\max_{\{\ket j\}}\{\sum_{i}q_{i}F(\rho_{i},\rho_{i}')\}\\
 & =J(\mathcal{E}).\end{split}
\label{eq:2}\end{equation}
 It is obvious that the above equality is also valid for infinite
ensembles. Therefore, an ensemble $\mathcal{E}_{0}$ can be transformed
to another ensemble $\mathcal{E}_{1}$ by a unitary operation only
if they have the same ECCC, i.e., $J(\mathcal{E}_{0})=J(\mathcal{E}_{1})$.

As an example, we consider the set of states used in the BB84 protocol, and for any given $p$
($0 \leq p \leq 1$) we define
an ensemble $\mathcal{E}(p)$ as the set of states $\{\ket{0}, \ket{1}, (\ket{0}+\ket{1})/\sqrt{2},(\ket{0}-\ket{1})/\sqrt{2}\}$ with
different prior probabilities $\{p/2,p/2,(1-p)/2,(1-p)/2\}$.
The ensemble used in the BB84 protocol is essentially $\mathcal{E}(p=0.5)$.
A straightforward calculation yields the ECCC of the ensemble $\mathcal{E}(p)$ as
\begin{equation}
J(\mathcal{E}(p)) = \frac{3}{4} + \frac{1}{4} |2 p -1| .
\end{equation}
For the two ensembles, $J(\mathcal{E}(0.9))$ and $J(\mathcal{E}(0.5))$, specified by two different values of $p$, one easily has $J(\mathcal{E}(0.9))=0.95$, and $J(\mathcal{E}(0.5))=0.75$.  Although both ensembles include the same set of quantum states, the ensemble $\mathcal{E}(0.9)$ is much more classical than the ensemble $\mathcal{E}(0.5)$ according to our definition of classicality.
This is also intuitively correct, as in the limit case $p \rightarrow 0$ or $p \rightarrow 1$,
the ensemble $\mathcal{E}(p)$ becomes a purely classical ensemble.
The ECCC $J$ of an ensemble $\mathcal{E} = \{ p_i, \rho_i \}$ is essentially the maximum average cloning fidelity under a classical cloning strategy, it depends on the set of prior probabilities $\{p_i \}$.

Next, we consider two specific ensembles with infinite number of states in two-dimensional Hilbert
space. A general basis of the two dimensional Hilbert space can be conveniently written as: $\ket{e_{1}}=\cos(\theta_{1}/2)\ket 0+\sin(\theta_{1}/2)e^{i\varphi_{1}}\ket 1$ and $\ket{e_{2}}=\sin(\theta_{1}/2)\ket 0-\cos(\theta_{1}/2)e^{i\varphi_{1}}\ket 1$. The first infinite ensemble $\mathcal{E}_{bloch}$ we consider consists of pure states uniformly distributed on the Bloch sphere, i.e., $\mathcal{E}_{bloch}=\{1/{4\pi},\cos(\theta/2)\ket 0+\sin(\theta/2)e^{i\varphi}\ket 1\}$, where $\theta\in[0,\pi]$ and $\varphi\in[0,2\pi)$.
The average cloning fidelity of this ensemble is $F_{ave}=2/3$ which
is independent of the basis used in the classical cloning process,
so $J(\mathcal{E}_{bloch})=2/3$.
The other ensemble we consider, a symmetric double-circle ensemble, is defined for a fixed $\theta$ as  $\mathcal{E}(\theta)=\{1/{4\pi},\cos(\theta/2)\ket 0\pm\sin(\theta/2)e^{i\varphi}\ket 1\}$, where $\varphi\in[0,2\pi)$. The states in the ensemble $\mathcal{E}(\theta)$
lie on two symmetric latitudinal circles of the Bloch sphere with
polar angles $\pm\theta$. The average cloning fidelity of this ensemble
is $F_{ave}(\theta,\theta_{1},\varphi_{1})=1-{\sin^{2}\theta}/{2}+{\sin^{2}\theta_{1}}(3{\sin^{2}\theta}-2)/4$.
According to the definition of $J$, we have \begin{equation}
\begin{split}J(\theta)= & \max_{\{\theta_{1},\varphi_{1}\}}\{F_{ave}(\theta,\theta_{1},\varphi_{1})\}\\
= & \left\{ \begin{aligned} & 1-\frac{1}{2}{\sin^{2}\theta} & \textrm{if \ensuremath{0\leq\sin\theta\leq\sqrt{2/3}}}\\
 & \frac{1}{2}+\frac{1}{4}{\sin^{2}\theta} & \textrm{if \ensuremath{\sqrt{2/3}<\sin\theta\leq1}}\end{aligned}.
\right.\end{split}
\label{eq:3}\end{equation}
It can be seen that when $\theta=\arcsin(\sqrt{2/3})$ or $\pi-\arcsin(\sqrt{2/3})$,
$J(\theta)$ reaches the minimal value $2/3$, which is also the
ECCC of $\mathcal{E}_{bloch}$. When $\theta=\pi/2$, the states in
the ensemble are equiprobably distributed on the $x-y$ equator, and
the $J$ of this ensemble is $3/4$.

An unknown state cannot be perfectly cloned, but can be approximately
cloned. The approximate cloning theories have been established and
developed very well~\cite{clone3,clone2,clone1,clone4}. In Fig.
1, the ECCC $J(\theta)$ of $\mathcal{E}(\theta)$ is depicted as a function of $\theta$, together with the ECCC $J(\mathcal{E}_{bloch})$ of the ensemble $\mathcal{E}_{bloch}$,
the fidelity of the optimal mirror phase-covariant cloning (MPCC)~\cite{clone4},
and the fidelity of universal cloning (UC)~\cite{clone1}.
From Fig. 1, one can see that the MPCC fidelity $F(\theta)$
and the $J(\theta)$ reach their minimal values ($5/6$ and $2/3$ respectively) simultaneously when $\theta=\arcsin(\sqrt{2/3})$ or $\theta=\pi-\arcsin(\sqrt{2/3})$.
The minimal value of the MPCC fidelity is equal to the UC fidelity, and
the minimal value of $J(\theta)$ is equal to $J(\mathcal{E}_{bloch})$.
Roughly speaking, Fig. 1 shows that the more classical
an ensemble is, the more perfectly the states in the ensemble can be cloned.
The ECCC $J$ of the ensembles used in the BB84 \cite{bb84} protocol and
in the six-state protocol \cite{six1,six2} are  $3/4$ and $2/3$
respectively. It is interesting to note that the optimal cloning strategies
for the BB84 ensemble and the six-state ensemble are equivalent to
the optimal strategies for the phase-covariant cloning and the universal cloning
respectively~\cite{clone3}.
\begin{figure}[t]
 \centering \includegraphics[scale=0.5]{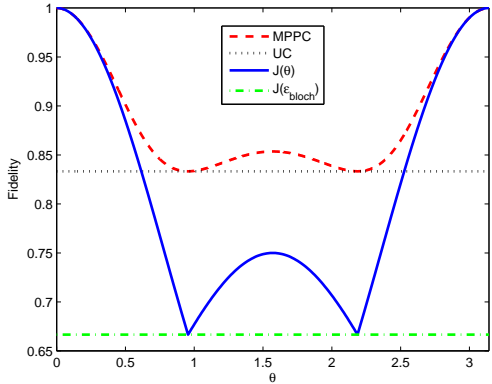} \caption{(Color online)The $\theta$ dependence of the ECCC and the cloning
fidelilties for $\mathcal{E}(\theta)$ and $\mathcal{E}_{bloch}$:
$J(\theta)$ (solid), $J(\mathcal{E}_{bloch})$ (dashdot), the MPPC
(dashed), and the UC (dotted). }

\label{fig:duibi}
\end{figure}

\section{The quantumness of an ensemble}

Next, we turn to study an opposite property of a quantum ensemble.
We define the quantumness $Q$ of an ensemble $\mathcal{E}=\{q_{i},\rho_{i}\}$
as \begin{equation}
Q(\mathcal{E})=1-J(\mathcal{E})=\min_{\{\ket j\}}\{\sum_{i}q_{i}(1-F(\rho_{i},\rho_{i}'))\}.\label{eq:quantumness}\end{equation}
The quantity $Q$ has similar properties to those of $J$. We have $0\leq Q<\frac{d-1}{d}$ for any
ensemble, and $0\leq Q\leq (1-q_m)(1-\frac{1}{d}) \leq \frac{(N-1)(d-1)}{Nd}$ for any finite ensemble $\mathcal{E}=\{q_i, \rho_i | i=1,2,\cdots,N\}$ of $N$ states,
where $q_{m}=max\{q_1,\cdots,q_N \}$.
It can also be seen that $Q=0$ for a single-state ensemble or an ensemble of
mutually commuting states, $Q\leq\frac{d-1}{d+1}$ for an ensemble of pure states, and $Q\leq\frac{1}{3}$ for an ensemble of states in a two-dimensional Hilbert space.

In~\cite{qianren2}, Fuchs et al. define a quantity $Q(S)$ to quantify
the quantumness of a set of pure states by the difficulty
of transmitting the states through a classical communication channel in the worst case.
Recently, Luo et al. gave a quantity $Q_{D}$
to quantify the quantumness of an ensemble through the disturbances
induced by von Neumann measurement~\cite{luo}. Instead of the relative
entropy which they used as the distance measure, we use the fidelity
to measure the distance between the two states.
Although both $Q_{D}$ and $Q$ are zero for the ensembles consisting of commuting states,
they are different in general.


The quantumness $Q$ of an ensemble tells us the extent to which the
ensemble is distinct from a purely classical ensemble, and we shall
see that the quantumness of an ensemble used for quantum key distribution
(QKD) is precisely the attainable lower bound of the error rate. In
the quantum key distribution theory, the error rate is the rate of
errors caused by eavesdroppers~\cite{qkd,ir}. Legitimate users
can use it to detect whether there exist eavesdroppers. Now we study
the relation between the quantumness of the ensemble used in a QKD
protocol and the error rate under the intercept-resend eavesdropping
strategies~\cite{ir}.
\begin{theorem}\label{thm: error rate}
The quantumness $Q$ of the ensemble used in a general QKD protocol
is the attainable lower bound of the error rate under the intercept-resend
eavesdropping strategy. \end{theorem}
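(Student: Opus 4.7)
The plan is to identify the intercept-resend attack with the classical cloning strategy developed earlier in the paper, and then optimise over Eve's free parameter, her choice of basis. First, I would model the attack as follows: Eve intercepts Alice's signal, performs a projective measurement in an orthonormal basis $\{\ket j\}$ of her choice, and resends the basis state $\ket j$ corresponding to her outcome. Linearity of the resulting channel gives that the state Bob receives when Alice sent $\rho_{i}$ is exactly $\rho_{i}' = \sum_{j} \bra j\rho_{i}\ket j\proj j$, which is the classical cloning output entering the definition of $J(\mathcal{E})$.

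Second, I would convert this into an error rate. For a pure-state QKD ensemble $\mathcal{E}=\{q_{i},\ket{\psi_{i}}\}$, Bob's measurement is by design such that in the absence of Eve he recovers Alice's key bit with probability one; hence its ``correct'' outcome on input $\ket{\psi_{i}}$ is the projector $\proj{\psi_{i}}$. The fidelity definition \eqref{eq:fidelity} specialised to a pure reference state gives $F(\ket{\psi_{i}},\rho_{i}')=\bra{\psi_{i}}\rho_{i}'\ket{\psi_{i}}$, which is precisely the probability that Bob's sifted-key bit agrees with Alice's. Averaging against $q_{i}$ gives an error rate of $\sum_{i}q_{i}(1-F(\ket{\psi_{i}},\rho_{i}'))=1-F_{ave}(\mathcal{E},\{\ket j\})$. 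Minimising over Eve's choice of basis, and using that the unitary group parametrising orthonormal bases is compact so that $J(\mathcal{E})=\max_{\{\ket j\}}F_{ave}(\mathcal{E},\{\ket j\})$ is attained, the minimum achievable error is $1-J(\mathcal{E})=Q(\mathcal{E})$, realised at the basis attaining $J$. This simultaneously furnishes attainability and optimality within this family of attacks.

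The step requiring the most care is showing $Q(\mathcal{E})$ is a genuine lower bound, i.e.\ that no more general intercept-resend attack---one in which Eve uses an arbitrary POVM $\{E_{a}\}$ and resends arbitrary (not necessarily basis) states $\tau_{a}$---can do strictly better than the projective-and-resend-basis attack above. I anticipate this as the main technical obstacle. The argument I would pursue is to write Eve's induced channel as $\Lambda(\rho)=\sum_{a}\mathrm{Tr}(E_{a}\rho)\tau_{a}$, reduce to rank-one $E_{a}$ and pure $\tau_{a}$ by extremality of the set of measure-and-prepare channels, and then exploit the tight-frame symmetry typical of QKD ensembles (e.g.\ $\sum_{i}\proj{\psi_{i}}\propto I$ for the BB84 and six-state protocols) to bound $\sum_{i}q_{i}\bra{\psi_{i}}\Lambda(\proj{\psi_{i}})\ket{\psi_{i}}\le J(\mathcal{E})$; on the Bloch sphere for a qubit, writing $E_{a}=\alpha_{a}I+\vec{\beta}_{a}\cdot\vec{\sigma}$ and using both completeness $\sum_{a}\alpha_{a}=1$ and rank-one extremality $|\vec{\beta}_{a}|=\alpha_{a}$ should reduce the question to a direct Cauchy--Schwarz estimate. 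Once this reduction is in place, the theorem follows from the two earlier paragraphs.
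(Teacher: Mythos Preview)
Your first two paragraphs are exactly the paper's proof: model the attack as a projective measurement in a basis $\{\ket j\}$ with resend of the outcome state $\ket j$, identify Bob's sifted-key success probability on input $\ket{\psi_i}$ with $F(\proj{\psi_i},\rho_i')=\sum_j|\langle j|\psi_i\rangle|^4$, average to get the error rate $R=\sum_i q_i\bigl(1-F(\proj{\psi_i},\rho_i')\bigr)$, and minimise over Eve's basis to obtain $Q(\mathcal E)$, attained at the basis realising $J(\mathcal E)$.

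Your third paragraph is unnecessary overreach. The paper, following the cited Huttner--Ekert reference, takes ``intercept-resend'' in its standard narrow sense: Eve performs a projective measurement in an orthonormal basis and forwards the observed basis state. Under that convention the family of attacks is parametrised precisely by the choice of $\{\ket j\}$, so your second paragraph already delivers both the lower bound and its attainability; there is nothing further to prove. The paper makes no claim about general measure-and-prepare (entanglement-breaking) channels, and the reduction you sketch---invoking tight-frame symmetry of the ensemble and a Bloch-sphere Cauchy--Schwarz estimate---would in any case need structural hypotheses on $\mathcal E$ that the theorem statement does not impose. Drop paragraph three and your argument coincides with the paper's.
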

\begin{proof} In a general
QKD protocol, Alice sends a pure state $\ket{\psi_{i}}$  to Bob with
a probability $q_{i}$, and the ensemble used is $\{q_{i},\ket{\psi_{i}}\}$.
When Bob's measurement basis is different from Alice's sending basis,
the state Bob receives is discarded, and when their bases are the
same, the received state is reserved. The measurement results of the
reserved states are usually called the \emph{sifted} keys. The error
rate is the average probability that Bob's measurement gives a wrong
result in the sifted key. With the intercept-resend strategy, the eavesdropper Eve
intercepts a state from Alice, say $\ket{\psi_{i}}$, then performs
a projective measurement along the basis $\{|j\rangle\}$ and gets
an output $\ket j$ with a probability $|\langle j|\psi_{i}\rangle|^{2}$,
and finally resends the output state to Bob. When Bob's measurement
basis is in accordance with Alice's sending basis, the probability
that Bob gets the original state $\ket{\psi_{i}}$ is $P=\sum_{j}|\langle j|\psi_{i}\rangle|^{4}=F(\proj{\psi_{i}},\rho_{i}')$,
where $\rho_{i}'=\sum_{j}|\langle j|\psi_{i}\rangle|^{2}\proj j$.
Thus the error rate for this strategy is $R=\sum_{i}q_{i}(1-F(\proj{\psi_{i}},\rho_{i}'))$.
The quantumness of the ensemble $\{q_{i},\ket{\psi_{i}}\}$ is $Q=\min_{\{\ket j\}}\{\sum_{i}q_{i}(1-F(\proj{\psi_{i}},\rho_{i}'))\}\leq R$.
Therefore, the quantumness $Q$ is the lower bound of the
error rate of a general QKD protocol, and the lower bound is achieved when the basis
along which Eve performs the measurement is chosen as the basis that is used to achieve the ECCC of the ensemble $\{q_{i},\ket{\psi_{i}}\}$.
\end{proof}

It is obvious that an ensemble whose quantumness $Q$ is zero or
very small is not suitable for QKD, since the eavesdropper
can get the information of the keys without being detected.
The quantumness $Q$ of an ensemble is closely related to the
security of QKD protocol against the intercept-resend eavesdropping
strategy.
The error rates for BB84 protocol and six-state protocol
are $1/4$ and $1/3$ respectively~\cite{qkd}.
By simple calculation, we know that the quantumness of the two
ensembles used in these two QKD protocols are $1/4$ and $1/3$ respectively, which
are equal to their error rates. The quantumness of the six-state ensemble
is $1/3$ which reaches the upper bound of the quantumness over all
ensembles of qubit states. For the intercept-resend eavesdropping
strategy, it can be seen that the six-state QKD protocol is most secure
among the QKD protocols which use states in two-dimensional Hilbert space.

\section{Conclusion}

In conclusion, we have proposed a quantity $J$, the
ensemble classicality based on classical cloning strategy (ECCC), to measure the
classicality of a given ensemble. The quantity $J$ can tell how classical
an ensemble is. When $J=1$ the ensemble behaves like a purely classical
ensemble; and when $J<1$ the ensemble cannot be considered as a classical
ensemble. We have revealed that the more classical an ensemble is,
the better an unknown state from the ensemble can be cloned. The quantity
of ECCC provides us with a tool to evaluate how well classical tasks
such as cloning, deleting, and distinguishing could be accomplished
for quantum ensembles. We also define the quantumness of an ensemble
and we surprisingly find that the quantumness of an ensemble used
in quantum key distribution is exactly the attainable lower bound
of the error rate. Our work could be useful for further investigation
of classical and quantum features of quantum ensembles and it could provide
a quantitative framework for various tasks in quantum communication.

\section*{Acknowledgments}

The authors acknowledge the support from the NNSF of China (Grant
No. 11075148), the CUSF, the CAS, and the National Fundamental Research
Program.

\section*{Appendix A: Proof of theorem 1}

\begin{proof}
(i) The upper bound can be easily shown since
$J(\mathcal{E})=\max_{\{\ket j\}}\{\sum_{i}q_{i}F(\rho_{i},\rho_{i}')\}\leq\sum_{i}q_{i}=1$
due to  $F(\rho_{i},\rho_{i}')\leq1$.
Now we prove that $J(\mathcal{E})=1$ if and only if all quantum states
in the ensemble are mutually commutative. Suppose $\{\ket{j^{*}}\}$
is the basis that maximizes $F_{ave}$ for the ensemble $\mathcal{E}$.
If $J(\mathcal{E})=1$, then for each $i$, $F(\rho_{i},\rho_{i}')=1$
and thus $\rho_{i}=\rho_{i}'=\sum_{j}\bra{j^{*}}\rho_{i}\ket{j^{*}}\proj{j^{*}}$.
So all the states are diagonal in the same basis $\{\ket{j^{*}}\}$,
and they commute with each other. On the other hand, if all the states
in the ensemble commute with each other, all of them can be diagonalized
simultaneously, i.e., there exists a basis in which all the states
are diagonal and we can use this basis in the classical cloning strategy,
then $\rho_{i}'=\rho_{i}$ and $F(\rho_{i},\rho_{i}')=1$ for each
$i$, so we get $J(\mathcal{E})=1$.

(ii) Now we try to prove the lower bounds.
Let $\rho_m$ be the state in the ensemble with the largest probability $q_m$, and
$\{ \ket{e^m_j} | j=1,\cdots,d\}$ be the orthonormal eigenstates of $\rho_m$.
The classical cloning strategy could be performed with respect to the basis $\{ \ket{e^m_j} | j=1,\cdots,d\}$, therefore
$J(\mathcal{E})\geq\sum_{i}q_{i}F(\rho_{i},\rho_{i}')$,
where $\rho_{i}'=\sum_{j}\bra{e^m_j} \rho_{i}\ket{e^m_j}\proj{e^m_j}$
and $F(\rho_{m},\rho_{m}')=1$.
The fidelity satisfies the inequality $F(\rho,\rho')\geq\mathrm{tr}\rho\rho'$~\cite{jozsa}
for any two states $\rho$ and $\rho '$,
so $J(\mathcal{E})\geq\sum_{i\neq m}q_{i}\mathrm{tr}\rho_{i}\rho_{i}'+q_{m}=\sum_{i\neq m}q_{i}\mathrm{tr}\rho_{i}'^{2}+q_{m}$.
Since $\rho_i'$ is diagonal in the basis $\{ \ket{e^m_j} | j=1,\cdots,d\}$,
we have $\mathrm{tr}\rho_{i}'^{2}=\sum_{j=1}^{d}(\rho_i')_{jj}^{2}=\sum_{j=1}^{d}(\rho_i)_{jj}^{2}\geq(\sum_{j=1}^{d}(\rho_i)_{jj})^{2}/d=1/d$.
Thus $J(\mathcal{E})\geq\sum_{i\neq m}q_{i}/d+q_{m}=1/d+q_{m}(d-1)/d\geq (N+d-1)/Nd$
since $q_m \geq 1/N$.  This completes the proof.

\end{proof}

\section*{Appendix B: Proof of lemma 1}
\begin{proof}
A state $\rho$ in a two-dimensional Hilbert space has a spectral decomposition as $\rho=q_{0}\proj{\psi_{0}}+q_{1}\proj{\psi_{1}}$,
where $\ket{\psi_{0}}$ and $\ket{\psi_{1}}$ are the orthonormal
eigenstates of $\rho$.
We choose the basis for a classical cloning strategy as $\ket{e_{0}}=\cos(\theta_{1}/2)\ket{\psi_{0}}+\sin(\theta_{1}/2)e^{i\varphi_{1}}\ket{\psi_{1}}$
and $\ket{e_{1}}=\sin(\theta_{1}/2)\ket{\psi_{0}}-\cos(\theta_{1}/2)e^{i\varphi_{1}}\ket{\psi_{1}}$.
The output state from the classical cloning process is
\begin{equation}
\rho'=\bra{e_{0}}\rho\ket{e_{0}}\proj{e_{0}}+\bra{e_{1}}\rho\ket{e_{1}}\proj{e_{1}}.\end{equation}
As $\rho$ can be written as $\rho=(I+\boldsymbol{r}\cdot\boldsymbol{\sigma})/2$~\cite{nielsen},
where $\boldsymbol{r}$ is a real three-dimensional vector, $0\leq|\boldsymbol{r}|\leq1$,
and $\boldsymbol{\sigma}=(\sigma_{x},\sigma_{y},\sigma_{z})$. Similarly, we write $\rho'=(I+\boldsymbol{r}'\cdot\boldsymbol{\sigma})/2$.
Using eq. (10) given in ~\cite{jozsa}, we get \begin{equation}
\begin{split}F(\rho,\rho') & =\frac{1}{2}\{1+\boldsymbol{r}\cdot\boldsymbol{r}'+[(1-\boldsymbol{r}\cdot\boldsymbol{r})(1-\boldsymbol{r}'\cdot\boldsymbol{r}')]^{1/2}\}\\
 & =\frac{1}{2}\{1+(q_{0}-q_{1})^{2}\cos^{2}\theta_{1}\\
 & +[(1-(q_{0}-q_{1})^{2})(1-(q_{0}-q_{1})^{2}\cos^{2}\theta_{1})]^{1/2}\}.\end{split}
\end{equation}
 Let $\rho_{0}'=\sum_{i=0}^{1}|\langle e_{i}|\psi_{0}\rangle|^{2}\proj{e_{j}}$
and $\rho_{1}'=\sum_{j=0}^{1}|\langle e_{j}|\psi_{1}\rangle|^{2}\proj{e_{j}}$,
then \begin{equation}
\begin{split}\sum_{i=0}^{1}q_{i}F(\proj{\psi_{i}},\rho_{i}') & =\sum_{i,j=0}^{1}q_{i}|\langle e_{j}|\psi_{i}\rangle|^{4}\\
 & =1-\frac{1}{2}\sin^{2}\theta_{1}\end{split}
\end{equation}
Thus, \begin{equation}
\begin{split}F(\rho,\rho')- & \sum_{i=0}^{1}q_{i}F(\proj{\psi_{i}},\rho_{i}')\\
 & \geq\frac{1}{2}\sin^{2}\theta_{1}(1-(q_{0}-q_{1})^{2})\geq0\end{split}
.\end{equation}
This completes the proof of Lemma 1.  \end{proof}

\section*{Appendix C: Proof of theorem 2}

\begin{proof}
(i) For an ensemble $\mathcal{E}=\{q_{i},\ket{\psi_{i}}\}$ of pure states,
$J=max_{\{\ket j\}}\{\sum_{i}q_{i}F(\ket{\psi_{i}},\rho_{i}')\}=max_{\{\ket j\}}\{\sum_{ij}q_{i}|\langle j|\psi_{i}\rangle|^{4}\}$,
where $\rho_{i}'=\sum_{j}|\langle j|\psi_{i}\rangle|^{2}\proj j$.
 $\rho_{i}'$ is also the density matrix after the projective measurement
on $\ket{\psi_{i}}$ along the basis $\{\ket j\}$. By the definition
of $J$ we get $J\geq\overline{F_{ave}(\mathcal{E})}$, where the
average is over all projective measurements, with respect
to the unitarily invariant measure~\cite{qianren3}.
For any fixed state $\ket{\psi}$, one can prove that ~\cite{qianren2,jones}
\begin{equation}
\int|\langle\phi|\psi\rangle|^{2n}\mathrm{d}\Omega_{\phi}=\frac{\Gamma(d)\Gamma(1+n)}{\Gamma(1)\Gamma(d+n)}
\label{uim}
\end{equation}
where the integral is over all pure states $\phi$ in a $d$ dimensional  Hilbert space with respect to the unitarily invariant measure $\mathrm{d}\Omega_{\phi}$ on the pure states,
and $\Gamma(x)$ is the Gamma function.
So we get that \begin{equation}
\begin{split} & J\geq\overline{F_{ave}(\mathcal{E})}=\overline{\sum_{ij}q_{i}|\langle j|\psi_{i}\rangle|^{4}}\\
 & =d\sum_{i}q_{i}\int|\langle\phi|\psi_{i}\rangle|^{4}\mathrm{d}\Omega_{\phi}=d\sum_{i}q_{i}\frac{\Gamma(d)\Gamma(3)}{\Gamma(1)\Gamma(d+2)}\\
 & =\frac{2}{d+1}.\end{split}
\label{eq:4}\end{equation}
From above derivation, one can easily see that the lower bound is actually achieved by an infinite ensemble consisting equiprobably of all pure states in a $d$-dimensional Hilbert space.

(ii) For a two dimensional states ensemble, from Lemma 1, \begin{equation}
\begin{split}J= & \max_{\{\ket{e_{j}}\}}\{\sum_{ik}q_{i}F(\rho_{i},\rho_{i}')\}\\
 & \geq\max_{\{\ket{e_{j}}\}}\{\sum_{ik}q_{i}q_{ik}F(\ket{\psi_{ik}},\rho_{ik}')\}\\
 & =\max_{\{\ket{e_{j}}\}}\{\sum_{ik}q_{i}q_{ik}|\langle e_{j}|\psi_{ik}\rangle|^{4}\} \\
 &\geq  \sum_{ik}q_{i}q_{ik} \; \overline{ |\langle e_{j}|\psi_{ik}\rangle|^{4}},\end{split}
\end{equation}
where $\{q_{ik},\ket{\psi_{ik}}\}$ are the eigenvalues and corresponding
eigenvectors of $\rho_{i}$, and the average is over all projective measurements $\{\proj{e_j}\}$. From Eq. \eqref{uim}, we get
\begin{equation}
J \geq \frac{2}{3}\sum_{ik}q_{i}q_{ik}=\frac{2}{3}.\end{equation}
The lower bound is achieved by an infinite ensemble consisting equiprobably of all pure states on the Bloch sphere.
\end{proof}

\bibliographystyle{model1a-num-names}
\bibliographystyle{model1a-num-names}
\bibliography{<your-bib-database>}

\end{document}